 \documentclass[smallabstract,smallcaptions]{dccpaper}

\usepackage{srcltx}
\usepackage{epsfig}
\usepackage{citesort}
\usepackage{amsmath}
\usepackage{amssymb}
\usepackage{color}
\usepackage{url}
\usepackage{pstricks}
\usepackage{pst-node}
\usepackage{pst-tree}
\usepackage{graphicx}
\usepackage[cp1252]{inputenc}
\usepackage{caption} 

\newtheorem{theorem}{Theorem}
\newenvironment{proof}{\noindent {\em Proof:}}{\hfill $\Box$ \\ }

\newlength{\figurewidth}
\newlength{\smallfigurewidth}

\setlength{\smallfigurewidth}{2.75in}
\setlength{\figurewidth}{6in}

\begin{document}

\title
{\large
\textbf{Approximating Optimal Bidirectional Macro Schemes
\thanks{ The work reported in this article was supported by national funds
through
  Funda\c{c}\~ao para a Ci\^encia e Tecnologia (FCT) through projects NGPHYLO PTDC/CCI-BIO/29676/2017 and
    UID/CEC/50021/2019.
Funded in part by European Union's Horizon 2020 research and innovation
programme under the Marie Sklodowska-Curie grant agreement No 690941 (project
BIRDS). G.N.\ funded in part by Millennium Institute for Foundational Research
on Data (IMFD), Chile.}}
}

\author{%
Lu\'{i}s M. S. Russo$^{\ast}$,
Ana D. Correia$^{\ast}$,
Gonzalo Navarro$^{\dag}$,
Alexandre P. Francisco$^{\ast}$\\[0.5em]
{\small\begin{minipage}{\linewidth}\begin{center}
\begin{tabular}{ccc}
$^{\ast}$INESC-ID, Dept. of Computer & \hspace*{-0.2in} & $^{\dag}$Millennium Institute for \\
Science and
Engineering && Foundational Research on Data (IMFD), \\
Instituto Superior T\'ecnico && Dept. of Computer Science, \\
Universidade de Lisboa, Portugal. && University of Chile, Chile. \\
\url{luis.russo@tecnico.ulisboa.pt}
&& \url{gnavarro@dcc.uchile.cl} \\
\url{ana.duarte.correia@tecnico.ulisboa.pt} \\
\url{aplf@tecnico.ulisboa.pt} \\
\end{tabular}
\end{center}\end{minipage}}
}

\maketitle
\thispagestyle{empty}

\begin{abstract}
%
%
Lempel-Ziv is  an easy-to-compute member of a wide family of so-called
macro schemes; it restricts pointers to go in one direction only.
Optimal bidirectional
macro schemes are NP-complete to find, but they may provide much better
compression on highly repetitive sequences. We consider the problem of
approximating optimal bidirectional macro schemes. We describe a simulated
annealing algorithm that usually converges quickly. Moreover, in some cases,
we obtain bidirectional macro schemes that are provably a 2-approximation of
the optimal. We test our
algorithm on a number of artificial repetitive texts and verify that
it is efficient in practice and outperforms Lempel-Ziv, sometimes by a wide
margin.
\end{abstract}


\Section{Introduction} 

In 1976, Lempel and Ziv \cite{LZ76} proposed a technique to measure the
complexity of finite sequences that later became a popular compression
algorithm. It is a greedy left-to-right parse of the sequence into ``phrases''
that, at each step, extends the current phrase as much as possible as long as
the sequence contains another occurrence of the phrase starting before it.
Then, it adds one more symbol to the phrase (which makes it unique in
the sequence seen so far). Such a so-called Lempel-Ziv parse can be computed
in linear time \cite{RPE81}, which has made it a very popular compression
method.

Storer and Szymanski \cite{SS82} studied a much wider class of so-called
``macro schemes''. In particular, the smallest ``bidirectional macro scheme''
partitions the sequence into a sequence of phrases such that each phrase is
either an explicit symbol or it can be copied from somewhere else in the text,
as long as cycles are not introduced in the copying process. Such schemes can
produce parsings up to $\Theta(\log n)$ times smaller than
Lempel-Ziv~\cite{GNP18}
(on a text of length $n$), but unfortunately finding the optimal
bidirectional macro scheme is NP-complete \cite{Gal82}. This has hampered its
popularity.

In this paper we describe an algorithm to efficiently compute a small
(bidirectional) macro scheme. Our algorithm uses simulated annealing and
usually converges in a very small number of steps to a local minimum,
which, in some cases, is provably a 2-approximation to the smallest macro scheme for the sequence.
We show experimentally that the algorithm can obtain macro schemes that are
much smaller than those obtained by Lempel-Ziv, and that it is efficient in
practice thanks to the use of appropriate data structures.
To test the algorithm we devise a method for generating
highly repetitive strings whose optimal macro scheme is known. We also
experiment with families of highly compressible and uncompressible strings.

For practicality, we consider a slightly simpler class of macro schemes, which
makes them closer to the output of Lempel-Ziv parsings: the text is parsed into
a sequence of phrases, each of which has an explicit symbol at the end.

\def\s#1{[mnode=r]\psframebox*[fillstyle=solid,fillcolor=black,framearc=0.4]{\bf\textcolor{white}#1}}
\def\psrowhooki{\tiny \color{gray}}
\begin{figure}[t]
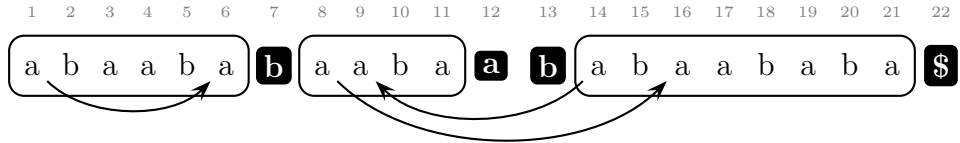

\begin{center}
\begin{psmatrix}[colsep=0.3,rowsep=0.2]
1& 2& 3& 4& 5& 6& 7& 8& 9& 10& 11& 12& 13& 14& 15& 16& 17& 18& 19& 20& 21& 22\\
a&b&a&a&b&a&\s{b}&a&a&b&a&\s{a}&\s{b}&a&b&a&a&b&a&b&a&\s{\$}\\
{
\psset{linearc=0.2}
\psset{nodesep=0.2}
\ncbox{2,1}{2,6}
\ncbox{2,8}{2,11}
\ncbox{2,14}{2,21}
}
{
\psset{nodesep=0.1}
\psset{arcangle=-45}
\psset{arrowsize=0.2}
\ncarc{->}{2,1}{2,6}
\ncarc{->}{2,8}{2,16}
\ncarc[arcangle=45]{->}{2,14}{2,9}
}
\end{psmatrix}
\end{center}
\vspace*{-5mm}
\caption{Example of a valid macro scheme.}
\label{fig:validMS}
\end{figure}

Figure~\ref{fig:validMS} shows an example of a valid macro scheme. An encoding
of this scheme is $(6,6,b),(16,4,a),(0,0,b),(9,8,\$)$, where
each tuple describes a phrase by giving a pointer to another position where
the phrase occurs,
the length of the phrase and the letter that ends the phrase. For technical
convenience, we
use a special terminator letter in the last phrase.
Notice that the first and second
phrases actually point forward in the string, which would be invalid in the Lempel-Ziv
encoding. This macro scheme is valid because all the letters can be decoded
without falling in loops.

\Section{The Problem} 
 Our challenge is to find a small valid macro scheme. Consider again the
macro scheme in Figure~\ref{fig:validMS}. This macro scheme is valid because
it is possible to decode every letter of the encoded string by using
the pointers in the encoding. For example to decode the letter at position 4 we can
follow the corresponding pointers to 9, 17 and finally to 12, which contains an explicit
letter.

Figure~\ref{fig:invalidMS} shows an invalid macro scheme, where
several letter positions are impossible to decode.
Consider, for example,
the letter at position 21. By following the pointers in the scheme we obtain the
sequence of positions $16, 11, 6, 11, \ldots$.
Hence the decoding process gets stuck in a loop that maps between 11 and 6. We will thus require
a data structure that can detect this problem, and a procedure to resolve it.

\begin{figure}[b]
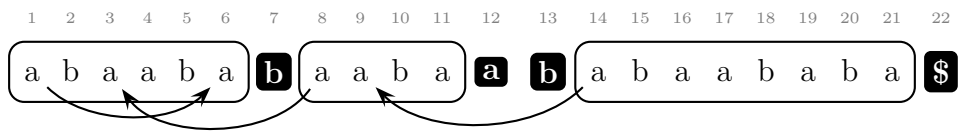

\begin{center}
\begin{psmatrix}[colsep=0.3,rowsep=0.2]
1& 2& 3& 4& 5& 6& 7& 8& 9& 10& 11& 12& 13& 14& 15& 16& 17& 18& 19& 20& 21& 22\\
a&b&a&a&b&a&\s{b}&a&a&b&a&\s{a}&\s{b}&a&b&a&a&b&a&b&a&\s{\$}\\
{
\psset{linearc=0.2}
\psset{nodesep=0.2}
\ncbox{2,1}{2,6}
\ncbox{2,8}{2,11}
\ncbox{2,14}{2,21}
}
{
\psset{nodesep=0.1}
\psset{arcangle=-45}
\psset{arrowsize=0.2}
\ncarc{->}{2,1}{2,6}
\ncarc[arcangle=55]{->}{2,8}{2,3}
\ncarc[arcangle=45]{->}{2,14}{2,9}
}
\end{psmatrix}
\end{center}
\vspace*{-5mm}
\captionsetup{belowskip=0pt}
\caption{Example of an invalid macro scheme.}
\label{fig:invalidMS}
\end{figure}

An important issue in determining the smallest macro scheme is how to select
the phrases. Testing every possible configuration is unfeasible. Our
approach will instead seek to alter a given configuration by searching
for good nearby configurations. This will amount to merging and splitting
phrases.
Figure~\ref{fig:smallerMS} shows a valid configuration
that can be obtained by merging the first two phrases in the
configuration of Figure~\ref{fig:validMS}.
Merging phrases reduces their total amount by one, whereas splitting
does the opposite. Still, splitting phrases may be essential to avoid loops or
as a stepping stone to better configurations.

\begin{figure}[t]
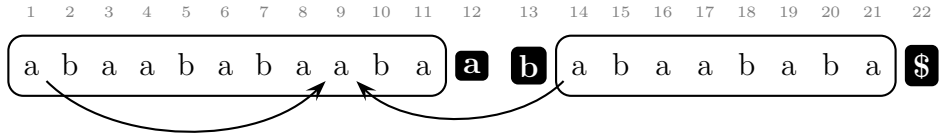

\begin{center}
\begin{psmatrix}[colsep=0.3,rowsep=0.2]
1& 2& 3& 4& 5& 6& 7& 8& 9& 10& 11& 12& 13& 14& 15& 16& 17& 18& 19& 20& 21& 22\\
a&b&a&a&b&a&b&a&a&b&a&\s{a}&\s{b}&a&b&a&a&b&a&b&a&\s{\$}\\
{
\psset{linearc=0.2}
\psset{nodesep=0.2}
\ncbox{2,1}{2,11}
\ncbox{2,14}{2,21}
}
{
\psset{nodesep=0.1}
\psset{arcangle=-45}
\psset{arrowsize=0.2}
\ncarc{->}{2,1}{2,9}
\ncarc[arcangle=45]{->}{2,14}{2,9}
}
\end{psmatrix}
\end{center}
\vspace*{-5mm}
\caption{Example of a smaller valid macro scheme.}
\label{fig:smallerMS}
\end{figure}
Like Lempel-Ziv, our algorithm also requires a data structure to locate
identical copies of the string inside the phrase; in particular we use a suffix
array \cite{MM93}.

\Section{An Annealing Algorithm} 
In this section we describe our general approach. We use the simulated
annealing technique \cite{LA87}, where each configuration is a state and a neighbor
state can be obtained by merging or splitting phrases. A transition
that successfully merges two phrases is always accepted. A transition that
splits phrases
may be accepted or rejected, depending on the current temperature $(t)$,
the increase in the number of phrases ($\delta$) and a random number $(p)$ chosen uniformly from
$[0,1]$. If Equation~(\ref{eq:1}) holds the transition is accepted, otherwise its rejected.
\begin{equation}
  \label{eq:1}
  \delta \leq - t \ln p
\end{equation}

At each step the algorithm chooses a phrase uniformly at random
and tries to merge it with the next phrase. For example we can
choose to merge the first and second phrases in the configuration in Figure~\ref{fig:validMS}.
In a successful attempt we can obtain the configuration in Figure~\ref{fig:smallerMS}. To
determine this configuration we locate another copy of the substring
$abaababaaba$. With the suffix array, we efficiently find that this
string occurs at positions $1$ and $9$.
Choosing a pointer from position $1$ to position $1$, would trivially lead to loops in the decoding
process. Hence
these kinds of pointers are always rejected. Fortunately, pointing to position
$9$ yields
a valid macro scheme.

\begin{figure}[b]
\begin{center}
\begin{pspicture}[showgrid=false](13,2)
\rput[bl](0,0){
\begin{psmatrix}[colsep=0.3,rowsep=0.2]
1& 2& 3& 4& 5& 6& 7& 8& 9& 10& 11& 12& 13& 14& 15& 16& 17& 18& 19& 20& 21& 22\\
a&b&\s{a}&a&b&a&\s{b}&a&a&b&a&\s{a}&\s{b}&a&b&a&a&b&a&b&a&\s{\$}\\
{
\psset{linearc=0.2}
\psset{nodesep=0.2}
\ncbox{2,1}{2,2}
\ncbox{2,4}{2,6}
\ncbox{2,8}{2,11}
\ncbox{2,14}{2,21}
}
{
\psset{nodesep=0.1}
\psset{arcangle=45}
\psset{arrowsize=0.2}
\ncarc{->}{2,1}{2,12}
\ncarc{->}{2,4}{2,12}
\ncarc[arcangle=55]{->}{2,8}{2,3}
\ncarc[arcangle=45]{->}{2,14}{2,9}
}
\end{psmatrix}
}
\end{pspicture}
\end{center}
\vspace*{-5mm}
\captionsetup{belowskip=0pt}
\caption{Example of a complex phrase merge.}
\label{fig:semivalidMS}
\end{figure}
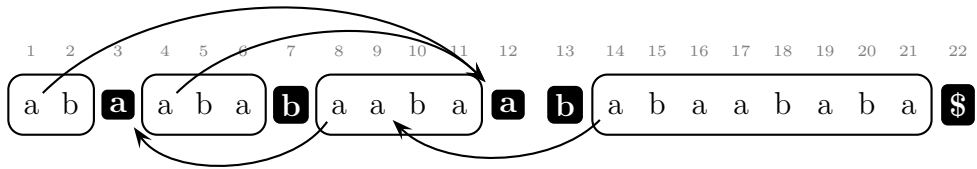


As a more involved example, assume that the current configuration is the one
in Figure~\ref{fig:semivalidMS} and that we decide to merge the first and second
phrases. We now need to select a pointer for the new phrase.
We use the suffix array to search for the string $abaaba$.
The resulting positions are $1,6,9,14$, where $1$ corresponds to the trivial loop
and is therefore excluded. We then select between $6$, $9$ and $14$ uniformly at random.
If we end up selecting $6$, then the configuration is the one presented in Figure~\ref{fig:invalidMS}.
As explained before, this configuration is invalid, thus some additional process is necessary
to obtain a valid configuration. First we try to sample again from the possible
pointers, $6, 9$ and $14$. If a valid configuration is obtained, then the
transition
is passed to the simulated annealing process and subsequently accepted. If, after
4 attempts, the process keeps on generating invalid macro schemes, we proceed to
splitting phrases.


Let us assume that the configuration in Figure~\ref{fig:invalidMS} was obtained
after 4 failed attempts. As illustrated, the letter at position $21$ cannot be decoded, because
it gets captured in a loop involving $11$ and $6$. However, when we select a new
pointer, we only need to try to decode the letters inside the newly created
phrase, and not
all the letters in the text. This means that $21$ is not tested by this process.
Instead, position $1$ is, and it will expose the underlying issue, because it
also gets captured into the cycle formed by $11$ and $6$.
We break this cycle by choosing uniformly between the positions in the cycle,
in our case $11$ and $6$. The selected position becomes an explicit letter.
Note that we do not consider position $1$: even
though it was the position that revealed the loop, it is not inside the
cycle and therefore cannot be selected. Note that solving
the cycle will solve the problem for position $1$ and also for other positions.
For example the letter at position
$21$ will also become decodable. The selected position becomes an
explicit letter and thus
splits the phrase that contains it. For example, if $11$ is selected, the resulting
configuration is shown in Figure~\ref{fig:splitMS}.

\begin{figure}[b]
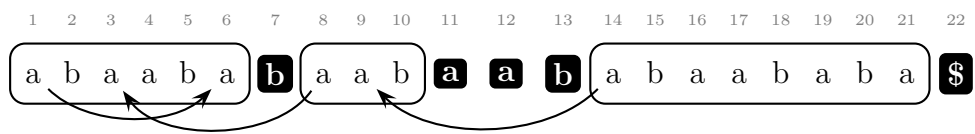

\begin{center}
\begin{psmatrix}[colsep=0.3,rowsep=0.2]
1& 2& 3& 4& 5& 6& 7& 8& 9& 10& 11& 12& 13& 14& 15& 16& 17& 18& 19& 20& 21& 22\\
a&b&a&a&b&a&\s{b}&a&a&b&\s{a}&\s{a}&\s{b}&a&b&a&a&b&a&b&a&\s{\$}\\
{
\psset{linearc=0.2}
\psset{nodesep=0.2}
\ncbox{2,1}{2,6}
\ncbox{2,8}{2,10}
\ncbox{2,14}{2,21}
}
{
\psset{nodesep=0.1}
\psset{arcangle=-45}
\psset{arrowsize=0.2}
\ncarc{->}{2,1}{2,6}
\ncarc[arcangle=55]{->}{2,8}{2,3}
\ncarc[arcangle=45]{->}{2,14}{2,9}
}
\end{psmatrix}
\end{center}
\vspace*{-5mm}
\captionsetup{belowskip=0pt}
\caption{Example of a phrase split.}
\label{fig:splitMS}
\end{figure}

Splitting a phrase is simpler than merging because we do not
need to select new pointers. In our example, the phrase that got split was
the second. This division did not produce two sub-phrases, only the left one.
Left sub-phrases always retain their pointer, in this case to position $3$.
If there was a right sub-phrase
it would point to position $7=3+4$, where $3$ is the original
pointer and $4$ the size of the left sub-phrase.

 Note that the configuration
in Figure~\ref{fig:splitMS} is still not a valid macro scheme: several positions still form cycles, for example
$4$ and $9$. The remaining cycles will be identified by trying to decode the letters
in the merged phrase, that is, positions from 1 to 6. In particular the cycle $4,9$ is
found by checking position $4$. In total there are four cycles in the configuration
of Figure~\ref{fig:invalidMS}, which means adding four phrases to that configuration.
Since this process started by merging two phrases, the overall difference in
the number of phrases is $3$. Hence, this transition is passed to the simulated
annealing algorithm, which decides whether to accept or reject the
transition, according to Equation~(\ref{eq:1}).

\Section{Data Structures and Optimizations} 
\label{sec:details}

In this section we discuss some details concerning the data structures
that we used for the implementation.

\SubSection{The Link Cut Tree Data Structure}

Checking whether any of the symbols of a merged phrase can fall into a loop
may take considerable time, since the decompressing paths can be long.
Instead, we use the link cut tree data structure \cite{ST83}, which
detects loops in only $O(\log n)$ amortized time per letter.
Since, in a valid macro scheme, it is possible to decode the letters at every
position, the decoding paths can be represented as a forest, where the roots
of the trees correspond to the positions that store explicit letters.

\begin{figure}[tb]
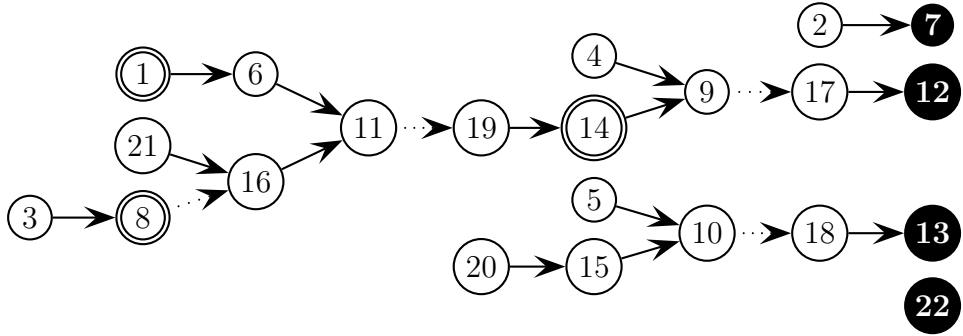

\begin{center}
{
\psset{arrowsize=0.3}
\psset{arrows=<-}
\psset{fillcolor=black}
\def\dedge{\ncline[linestyle=dotted]}

\psset{treesep=0.2}

\pstree[treemode=L,levelsep=0]{\Tp}{

\pstree[levelsep=1.5]{\Tcircle*[edge=none]{\textcolor{white}{\bf 7}}}{\Tcircle{2}}

\pstree[levelsep=1.5]{\Tcircle*[edge=none]{\textcolor{white}{\bf 12}}}{\pstree{\Tcircle{17}}{
\pstree{\Tcircle[edge=\dedge]{9}}{
\Tcircle{4}
\pstree{\Tcircle[doubleline=true]{14}}
{\pstree{\Tcircle{19}}{\pstree{\Tcircle[edge=\dedge]{11}}{
\pstree{\Tcircle{6}}{\Tcircle[doubleline=true]{1}}
\pstree{\Tcircle{16}}
{\Tcircle{21} \pstree{\Tcircle[edge=\dedge,doubleline=true]{8}}{\Tcircle{3}}}}}}}}}

\pstree[levelsep=1.5]{\Tcircle*[edge=none]{\textcolor{white}{\bf 13}}}{\pstree{\Tcircle{18}}{
\pstree{\Tcircle[edge=\dedge]{10}}
{\Tcircle{5}
\pstree{\Tcircle{15}}{\Tcircle{20}}}}}

\Tcircle*[edge=none]{\textcolor{white}{\bf 22}}
}
}
\caption{A forest representing the decoding paths of text positions of the
  macro scheme in Figure~\ref{fig:validMS}. Black circles represent
  positions that contain explicit letters; they are also roots. Double
  circles are used for positions that mark the beginning of blocks. Edges
  leaving double circles represent explicit links. Dotted edges are used to
  highlight edges that need to be altered when attempting to represent the
  scheme of Figure~\ref{fig:invalidMS}.}
\label{fig:decodeP}
\end{center}
\end{figure}

Figure~\ref{fig:decodeP} shows the forest corresponding to the macro scheme
of Figure~\ref{fig:validMS}.
Notice that this representation contains all the pointers in the macro
scheme. We have that $1$ points to $6$, that $8$ points to $16$, and that $14$
points to $9$. These starting positions of the phrases are shown with double
line circles in the figure. Hence the link cut tree representation contains all
the
information in the macro scheme except for the explicit letters. Moreover, the
forest contains all the implicit links that result from phrase pointers, for
example the pointer of the second phrase, from position $8$ to $16$, also induces
the implicit links $(9,17);(10,18);(11,19)$, shown as dotted arrows.

The link cut tree data structure supports edge insertion and removal, provided
the representation remains a forest at all times.  Let us discuss how this
structure changes when phrases are split or merged. Splitting phrases is
simple and efficient. Consider the configuration in Figure~\ref{fig:yas},
which results from splitting the configuration in Figure~\ref{fig:validMS} by
adding the letter at position $14$. In the link cut tree representation this
amounts to removing the edge that links $14$ in the path to its parent,
that is, the edge $(14,9)$. In general, splitting a phrase requires cutting a
single edge.

\begin{figure}[b]
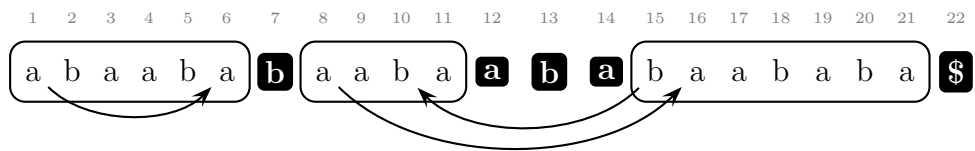

\begin{center}
\begin{psmatrix}[colsep=0.3,rowsep=0.2]
1& 2& 3& 4& 5& 6& 7& 8& 9& 10& 11& 12& 13& 14& 15& 16& 17& 18& 19& 20& 21& 22\\
a&b&a&a&b&a&\s{b}&a&a&b&a&\s{a}&\s{b}&\s{a}&b&a&a&b&a&b&a&\s{\$}\\
{
\psset{linearc=0.2}
\psset{nodesep=0.2}
\ncbox{2,1}{2,6}
\ncbox{2,8}{2,11}
\ncbox{2,15}{2,21}
}
{
\psset{nodesep=0.1}
\psset{arcangle=-45}
\psset{arrowsize=0.2}
\ncarc{->}{2,1}{2,6}
\ncarc{->}{2,8}{2,16}
\ncarc[arcangle=45]{->}{2,15}{2,10}
}
\end{psmatrix}
\end{center}
\vspace*{-5mm}
\captionsetup{belowskip=0pt}
\caption{Another example of a phrase split.}
\label{fig:yas} 
\end{figure}

Merging phrases requires more extensive modifications to the tree structure. In
particular, changing the pointer of a phrase implies altering all the
induced edges we mentioned above. Consider for example that we want to change
the configuration in Figure~\ref{fig:validMS} to that of
Figure~\ref{fig:invalidMS}. This requires changing all the pointers of the second
phrase. We first consider the positions inside this phrase, that is,
$8,9,10$, and $11$.
We cut the edges leaving these nodes, so they become roots in their trees.
These edges are drawn with dotted lines in Figure~\ref{fig:decodeP}. Then we
need to add the new edges (8,3);(9,4);(10,5);(11,6). However, it is necessary to
check if
this change does not introduce a cycle into the forest. This is supported by the
link cut tree data structure in $O(\log n)$ amortized time. So we first check if
there is a path from $3$ to $8$. In fact, there is a direct edge, so it is not
possible to add the edge $(8,3)$, because
it would result in an invalid macro scheme. The link cut tree data structure
supports selecting an edge from this path in $O(\log n)$ amortized time, which
combined with the cut operation can be used to implement the procedure
that splits a phrase that
contains a position in the underlying cycle. A similar process is used for the
remaining edges.



\SubSection{Suffix Arrays}

We use suffix arrays to determine the lexicographic range of all the
occurrences of a given phrase. In general this operation requires $O(m\log n)$
time for a phrase of size $m$. We use two optimizations. First, we cache the
searches by storing the resulting suffix array intervals.
When a phrase is split, this information is discarded. When two
phrases get merged we combine the two intervals in $O(\log n)$ time by using
the inverse suffix array and a binary search.

\SubSection{Optimizing the Simulated Annealing}


Another important optimization of our algorithm is related to the phrases that
cannot be merged with the next phrase because they result in a unique
substring. As discussed, this kind of
transitions is always rejected, because they induce a trivial loop. Once this
is detected for a given phrase, there is no point in reconsidering the phrase in
a future iteration, so the phrase gets removed from a list of admissible
phrases. The phrase selection procedure selects from this list, instead of
from all the
existing phrases. This speeds up the algorithm by skipping redundant steps. It
does require some maintenance, however. When a phrase is split, its two sub-phrases need to
be inserted in the list. Moreover, the phrase before the one being split also
needs to be re-inserted into the list, in case it is not present already. This
list should not contain repetitions, therefore we store it with a binary search
tree. An important side effect of this approach occurs when this list becomes
empty. In this case, the algorithm is stuck in a local minimum, which might go
unnoticed otherwise. Notice that the algorithm would still terminate as the
annealing temperature decreases, however no improvement would result from the
extra computation. This particular kind of minimum has important properties,
as we show next.

\Section{Approximation Ratio}

\begin{figure}[b]
\begin{center}
\begin{pspicture}[showgrid=false](14,1.8)
\rput[bl](0,0){
\begin{psmatrix}[colsep=0.1,rowsep=0.16]
\\
a&a&\s{a}&a&a&\s{b}&a&a&\s{a}&b&b&\s{a}&a&b&\s{a}&b&a&\s{a}&b&b&\s{b}&a&b&\s{a}&b&b&\s{a}&b&b&\s{b}&b&
\s{\$}\\
a&a&a&a&a&\s{b}&a&a&a&b&b&\s{a}&a&b&a&b&a&\s{a}&b&b&b&a&b&\s{a}&b&b&a&b&b&\s{b}&b& \s{\$}\\
a&a&a&a&\s{a}&b&a&a&\s{a}&b&b&a&\s{a}&b&a&b&a&\s{a}&b&b&b&\s{a}&b&a&b&b&\s{a}&b&b&\s{b}&b&\s{\$}\\
\end{psmatrix}
}
\ncline[linewidth=.5,nodesep=-0.2,linecolor=black!10]{c-c}{2,6}{4,6}
\ncline[linewidth=.5,nodesep=-0.2,linecolor=black!10]{c-c}{2,12}{4,12}
\ncline[linewidth=.5,nodesep=-0.2,linecolor=black!10]{c-c}{2,18}{4,18}
\ncline[linewidth=.5,nodesep=-0.2,linecolor=black!10]{c-c}{2,24}{4,24}
\ncline[linewidth=.5,nodesep=-0.2,linecolor=black!10]{c-c}{2,30}{4,30}
\rput[bl](0,0){
\begin{psmatrix}[colsep=0.1,rowsep=0.16]
\\
a&a&\s{a}&a&a&\s{b}&a&a&\s{a}&b&b&\s{a}&a&b&\s{a}&b&a&\s{a}&b&b&\s{b}&a&b&\s{a}&b&b&\s{a}&b&b&\s{b}&b&
\s{\$}\\
a&a&a&a&a&\s{b}&a&a&a&b&b&\s{a}&a&b&a&b&a&\s{a}&b&b&b&a&b&\s{a}&b&b&a&b&b&\s{b}&b& \s{\$}\\
a&a&a&a&\s{a}&b&a&a&\s{a}&b&b&a&\s{a}&b&a&b&a&\s{a}&b&b&b&\s{a}&b&a&b&b&\s{a}&b&b&\s{b}&b&\s{\$}\\
\end{psmatrix}
}
\end{pspicture}
\end{center}
\vspace*{-5mm}
\captionsetup{belowskip=0pt}
\caption{The top string shows a local minima macro scheme such that merging
adjacent phrases creates unique substrings. The middle configuration is obtained
by merging pairs of consecutive phrases, it is not a macro scheme and
therefore not an optimal configuration. The bottom configuration shows an
optimal macro scheme.}
\label{fig:Bruijn}
\end{figure}

Given that our algorithm never starts by splitting phrases, it may get
stuck in a local minimum. In fact, this occurred in  our experiments.
However, the particular structure of the minima turns out to be relevant.
In these minima merging any two consecutive phrases results in a phrase
that is unique, that is, it occurs only at its position. We will now prove that
such a configuration is a 2-approximation to the optimal macro scheme.

\begin{theorem}
Any configuration where every pair of consecutive phrases is unique is a
2-approximation to the optimal macro scheme of the sequence.
\end{theorem}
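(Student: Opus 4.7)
My plan is to establish $z\leq 2b^*$, where $z$ is the number of phrases in our configuration and $b^*$ is the size of the optimal macro scheme, by a charging argument. I would charge each phrase $p_i$ in our configuration to the unique optimal phrase $q_{f(i)}$ that contains the starting position of $p_i$. This distributes exactly $z$ charges over the $b^*$ optimal phrases, so the theorem reduces to the key lemma that no optimal phrase receives more than two charges.

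To prove the key lemma I would argue by contradiction. Suppose some optimal phrase $q_j$ receives at least three charges; because our phrases are linearly ordered and the optimal phrases form a partition of the text, these charges must come from three consecutive phrases $p_i, p_{i+1}, p_{i+2}$ whose starting positions all lie inside $q_j$. Since $p_{i+2}$ starts inside $q_j$, the phrase $p_{i+1}$ ends strictly before the last position of $q_j$, and that last position is exactly where $q_j$ stores its trailing explicit letter. Consequently the concatenation $u = p_i p_{i+1}$ lies entirely within the copy part of $q_j$, i.e.\ the prefix of length $|q_j|-1$ obtained by copying from elsewhere in the text.

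The uniqueness hypothesis is then invoked. The copy part of $q_j$ matches a substring starting at a source position different from the start of $q_j$, since a trivial self-pointer would immediately create a cycle and violate validity, as already observed in the algorithm description. Shifting the occurrence of $u$ inside $q_j$ by the source offset exhibits a second occurrence of $u$ at a different position in the text, contradicting the hypothesis that $p_i p_{i+1}$ occurs only once. This proves the key lemma, and $z\leq 2b^*$ follows.

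The main subtlety I anticipate is ensuring that $u$ really falls inside the copy part rather than overlapping the explicit letter at the end of $q_j$; the third charge from $p_{i+2}$ is precisely what forces $p_{i+1}$ to end inside the copy part, which is why the bound is $2$ rather than something weaker. Short edge cases with $|q_j|\in\{1,2\}$ are trivial, since at most $|q_j|$ of our phrases can start within a window of that length, so three charges are not even possible there.
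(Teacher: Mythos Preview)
Your argument is correct and rests on the same core observation as the paper's proof: because $p_i p_{i+1}$ is unique, it cannot lie inside the copy part of any optimal phrase, since that part occurs elsewhere in the text. The paper phrases this as ``every two consecutive phrases must contain a boundary of any macro scheme'' and then counts disjoint pairs against boundaries, whereas you recast the same fact as a charging bound of at most two phrase-starts per optimal phrase; these are dual formulations of the same counting argument, with your version being slightly more explicit about why the explicit trailing letter of $q_j$ cannot absorb the pair.
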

\begin{proof}
Consider the concatenation of any two consecutive phrases, which by hypothesis
is unique in the text. Such text substring cannot be inside a phrase of any
macro scheme, because in that case it should occur elsewhere. Thus, every two
consecutive phrases of our configuration must contain a boundary in any macro
scheme.
\end{proof}

Figure~\ref{fig:Bruijn} shows such a configuration and illustrates the
approximation argument. Its top string shows a valid macro scheme of 11 phrases
for the string, which moreover is a local minimum with the property that
merging any two consecutive phrases results in a unique substring. The
actual pointers are not relevant for this example. In the middle we consider
the configuration where phrase 1 is merged with phrase 2, phrase 3 is merged
with phrase 4, and so on. Since every substring in the middle is unique, there
must be a phrase in any macro scheme that ends within that string. The bottom
configuration illustrates this condition with a macro scheme of 8 phrases.

We can prove an even stronger result, related to {\em string attractors}
\cite{KP18}. An attractor is a set $\Gamma$ of text positions such that
any text substring must have a copy containing a position in $\Gamma$. It
is shown that the size $\gamma$ of the smallest attractor is a lower bound
to the size of any macro scheme. Further, finding $\gamma$ is NP-complete.
While it is not known whether we can always encode a text in $O(\gamma)$
space, we show that our approximation also applies to the smallest attractor.

\begin{theorem}
For any configuration where every pair of consecutive phrases is unique, the
set of the final phrase positions (i.e., the positions of the explicit symbols)
is an attractor of size at most $2\gamma$.
\end{theorem}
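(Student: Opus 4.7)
The plan has two parts: first, show that the set $E=\{e_1,\ldots,e_k\}$ of positions holding the explicit symbols is a string attractor; second, bound $|E|=k$ by $2\gamma$.

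For the attractor property, I would take an arbitrary substring $s=T[i..j]$. If $s$ already contains some $e_m$, the required occurrence is $s$ itself. Otherwise $s$ must lie entirely inside the non-final part of a single phrase $p_\ell$, since the positions $e_m$ separate the non-final parts of consecutive phrases into disjoint intervals. The pointer of $p_\ell$ maps this non-final part to a copy elsewhere in $T$, producing a new occurrence $s'=T[i'..j']$ of $s$. Either $s'$ contains some $e_m$ and we are done, or $s'$ again lies inside a non-final part of some phrase and we iterate. The key monovariant is the decoding depth of the leftmost position of the current occurrence, which is finite by the validity of the macro scheme and strictly decreases at each iteration because $i_{t+1}$ is precisely the decoding parent of $i_t$. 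After finitely many steps, the leftmost position coincides with some $e_m$, completing this part.

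For the size bound, I would exploit two facts: by hypothesis any two consecutive phrases form a substring that occurs only once in $T$, and the last phrase $p_k$ is itself unique because it contains the special terminator. Then I would partition the phrases into disjoint unique substrings: when $k$ is even, take the $k/2$ pairs $(p_1,p_2),(p_3,p_4),\ldots,(p_{k-1},p_k)$; when $k$ is odd, take the $(k-1)/2$ pairs $(p_1,p_2),\ldots,(p_{k-2},p_{k-1})$ together with the singleton $p_k$, giving $(k+1)/2$ disjoint unique substrings. Since each of these text substrings has a single occurrence, every attractor must include a position inside each of them; disjointness then makes those positions distinct and yields $\gamma\geq\lceil k/2\rceil$, hence $k\leq 2\gamma$.

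The main obstacle I expect is the termination argument of the first part: one has to pin down the right monovariant (the decoding depth of the leftmost position of the current occurrence) and verify that chasing the pointer of $p_\ell$ truly decreases it, which relies on the acyclic structure guaranteed by validity of the macro scheme. Once that is in place, everything reduces to a case analysis separating ``$s$ already hits some $e_m$'' from ``$s$ lies in a non-final part'', and the size bound is then a clean disjoint-packing argument whose only subtlety is noticing that the terminator makes $p_k$ unique in its own right, which is what absorbs the parity of $k$.
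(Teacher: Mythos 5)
Your proof is correct and follows essentially the same route as the paper: the explicit-symbol positions form an attractor because any substring avoiding them can be chased through phrase pointers until a copy hits an explicit symbol (the paper states this as an immediate consequence of the macro-scheme definition, while you spell out termination via the decoding-forest depth), and the size bound comes from packing disjoint unique substrings -- consecutive phrase pairs, plus the terminator-bearing last phrase when $k$ is odd -- each of which forces a distinct attractor position, giving $\gamma \geq \lceil k/2 \rceil$. No gaps.
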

\begin{proof}
First, the set is an attractor because, by definition of macro scheme, any
substring that is completely inside a phrase must have another occurrence
containing an explicit symbol position. To see that its size is at most
$2\gamma$, consider again
the concatenation of any two consecutive phrases, which by hypothesis
is unique in the text. Therefore any attractor must contain a position inside
the phrase. If there is an odd number of phrases, then there must also be an
attractor position at the end of the text to cover the terminator \$.
\end{proof}

Even though this condition is not always attained, it does occurs several
times for some classes of strings. For those cases, it is a considerable
improvement to the $O(\log n)$ approximation provided by the
Lempel-Ziv~\cite{GNP18} algorithm.

 \Section{Experimental Results}
We implemented our algorithm to test its performance. We tested the convergence
speed with Fibonacci, Thue-Morse and binary de Bruijn sequences, as well as on
strings obtained from a generator we developed for this purpose. Fibonacci
sequences are binary strings defined as $F_1=b$, $F_2=a$, and $F_{n+2} =
F_{n+1}\cdot F_n$. They have macro schemes of size $3$ (using our
symbol-terminated kind of phrases) \cite{GNP18}, which is optimal with a
binary alphabet.
Thue-Morse sequences
are strings defined as $T_0 = 0$ and then $T_{n+1} = T_n \cdot \overline{T_n}$,
where $\overline{T_n}$ means complementing all the bits of $T_n$.
Their optimal macro scheme size is unknown, but a lower bound is the number of
distinct substrings of size $\ell$ divided by $\ell$, for any $\ell$
\cite{KNP19}. This is between\footnote{See
{\tt https://fr.wikipedia.org/wiki/Suite\_de\_Prouhet-Thue-Morse} and
{\tt https://oeis.org/A005942}.} $3$ and $10/3$, so we take $3$ as a lower bound.
Finally, the binary de Bruijn sequence of order $t$ contains all the distinct
substrings of length $t$ and is of minimum length, $2^t + t - 1$. Therefore~\cite{KNP19},
a lower bound to the size of any macro scheme is $1+(2^t/t)$.

Our generator
chooses an alphabet size $d$ and builds a text with a bidirectional
macro scheme of size $d$, which must be optimal because $d$ is a lower bound.
We put the distinct characters at random, as phrase
terminators, then define the sources of the phrases at random, and check that
the scheme is valid. Any resulting valid scheme is then a text whose smallest
macro scheme is of known size.

Figure~\ref{fig:iPlots} shows the number of iterations of the simulated
annealing algorithm versus the number of phrases $k$ in the obtained
macro scheme, aggregated over 100 steps. We also show the
number of Lempel-Ziv phrases and the size of
the smallest macro scheme, or a lower bound if it is not known.

\begin{figure}[tb]
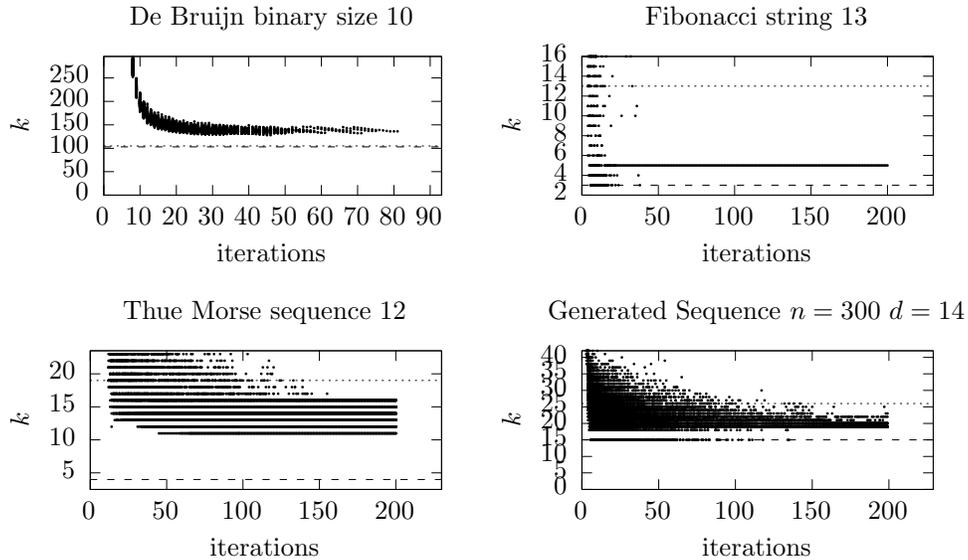

\centering
 \begin{minipage}[b]{.42\textwidth}
\footnotesize
 \centering \input{bruijn10.tex}
  \end{minipage}
 \begin{minipage}[b]{.42\textwidth}
\footnotesize
 \centering \input{fibonacci13.tex}
  \end{minipage}
 \begin{minipage}[b]{.42\textwidth}
\footnotesize
 \centering \input{thueMorse12.tex}
  \end{minipage}
 \begin{minipage}[b]{.42\textwidth}
\footnotesize
 \centering \input{generatorN300d14.tex}
  \end{minipage}
  %
\caption{Iterations of our algorithm ($\times 100$) versus number $k$ of
obtained phrases. The dotted line is the size of the Lempel-Ziv parse and
the dashed one the size of the optimal macro scheme (or a lower bound if
unknown).}
\label{fig:iPlots}
\end{figure}

Except on de Bruijn sequences, our algorithm obtains configurations that
require much fewer phrases than the Lempel-Ziv parse. In fact, in several executions
our algorithm obtains the optimal size. Except for the Thue-Morse strings
the algorithm was able to achieve the 2-approximation condition on some
runs.
For the
Thue-Morse strings the minimum value obtained by the algorithm was 11, but
this did not achieved the 2-approximation condition. The known lower bound
for this sequence was 4, but this may be below optimal.


The de Bruijin sequences did obtain the 2-approximation condition. In fact
the ratio is even better because the minimum size
is 103 and the points obtained are below 200. The ratio is closer
to 4/3, which is expected on average for this kind of sequences. To deduce
this factor notice that the string in Figure~\ref{fig:Bruijn} is a prefix
of a de Bruijn sequence. In this string any substring of size $5$ is
unique. Almost all the binary strings of size $5$ occur as substrings.
The 2-approximation is the worst
case. Trying to merge two blocks of size 2 yields a substring of size 5
that is unique. Merging two blocks of size $1$ or one block of size $1$ and
one block of size $2$ is always possible. This means that the resulting
blocks are essentially random, with sizes ranging from $2$ to $4$. In
general this amounts to choosing random numbers uniformly from
$[1/2,1]$. The resulting expected value is $3/4$, thus explaining the $4/3$
approximation.

For Fibonacci strings and generated sequences our algorithm quickly reaches
the optimum number of phrases. For the Thue-Morse sequences, our algorithm
produces macro schemes much smaller than Lempel-Ziv, albeit not a
guaranteed 2-approximation. For the generated strings we did achieve the
2-approximation condition some of the time, notice that when this condition
is obtained we terminate the algorithm. Otherwise the algorithm terminates
by reaching its maximum number of iterations. This means that the
executions that reach a 2-approximation stop yielding data points after
reaching the condition, thus thinning the cloud of points. The data points
that fade away are the ones that achieve the 2-approximation condition. For
the De Bruijn sequences this happens on all executions. For the generated
sequences several executions obtain the 2-approximation condition. Also
several iterations obtain less phrases than the Lempel-Ziv, notice the
points below the dotted line. Notice the trend for most of the points to
group below this line.


\Section{Conclusions and Further Work} 

We have shown that the smallest macro scheme \cite{SS82}, an
NP-hard-to-compute measure of compressibility, can be practically
approximated, for some highly repetitive families of strings. On most of
our tests we obtain much better approximations than the popular Lempel-Ziv
algorithm~\cite{LZ76}. This opens the door to stronger compression schemes
of highly repetitive sequences.

Our future steps are to devise a more practical version of our compression
algorithm. While
we have managed to make it practical, running over large files is still a
challenge. Because the link cut tree data structure uses one node
for each letter in the text. Although the memory usage is linear in the
size of the file it is a factor of more than $32$. We plan to reduce this
overhead by storing fewer nodes, but still supporting the necessary
operations. Moreover we also store the suffix array and the inverse suffix
array, in plain form, which also uses $16$ times the space of the
file. Using a compressed representation will require less space; a plethora
of such representations is now available~\cite{NMacmcs06}.
We aim to reduce the size of both these structures so that the amount
of extra space necessary to obtain the smallest macro scheme becomes
sub-linear in the size of the file to compress. Also the initialization of
the list of admissible phrases will have to be delayed until it is
small enough, otherwise it may require as much space as the previous data
structures. A simple solution is to initialize our algorithm with a
Lempel-Ziv parsing.


\Section{References}
\bibliographystyle{IEEEbib}
\bibliography{main}

\end{document}